\title{Asymptotic bayes optimality under sparsity for equicorrelated multivariate normal test statistics}
\author{Rahul Roy \and Subir Kumar Bhandari}
\date{}
\begin{document}
	\maketitle
	\theoremstyle{plain}
	\newcounter{dummy} \numberwithin{dummy}{section}
	\newtheorem{thm}[dummy]{Theorem} 
	\newtheorem{prpsn}{Proposition} 
	\newtheorem{lma}[dummy]{Lemma}
	\newtheorem{crlry}{Corollary}
	\theoremstyle{definition}
	\newtheorem{defn}[dummy]{Definition} 
	\newtheorem{asmtn}{Assumption}
	\begin{abstract}
		Here we address dependence among the test statistics in connection with asymptotically Bayes' optimal tests in presence of sparse alternatives. Extending the setup in \cite{bogdanetal11} we consider an equicorrelated ( with equal correlation $\rho$ ) multivariate normal assumption on the joint distribution of the test statistics, while conditioned on the mean vector $\boldsymbol{\mu}$. Rest of the set up is identical to \cite{bogdanetal11} with a slight modification in the asymptotic framework. We exploit an well known result on equicorrelated multivariate normal variables with equal marginal variances to decompose the test statistics into independent random variables. We then identify a set of independent yet unobservable gaussian random varibales sufficient for the multiple testing problem and chalk out the necessary and sufficient conditions for single cutoff tests to be ABOS based on those dummy variables following \cite{bogdanetal11}. Further we replaced the dummy variables with deviations of the statistics from their arithmatic means which were easily calculable from the observations due to the decomposition used earlier. Additional assumptions are then derived so that the necessary and sufficient conditions for single cutoff tests to be ABOS using the independent dummy variables plays the same role with the replacement variable as well (with a deviation of order $o(1)$).  Next with the same additional assumption, necessary and sufficient conditions for single cutoff tests to control the Bayesian FDRs are derived and as a consequence under various sparsity assumptions we proved that the classical Bonferroni and Benjamini-Hochberg methods of multiple testing are ABOS if the same conditions are satisfied.

	\end{abstract}
	
	\section{Introduction:}
	With the surge of technological innovations towards the end of the last century, accumulation of large data was emminent. Devising methods to appropriately combine decisions for several hypotheses was therefore a necessity. Fields such as genome wide association study (GWAS), epidemiology, astronomy, stock market analysis, meteorology require the applications of multiple hypotheses testing. Throughout the years statisticians have followed two vastly different approaches for testing multiple hypotheses, namely the frequentist or classical approach and the Bayesian approach. The main differences between the two approaches lies in how the problem of multiplicity adjustment is dealt with (\cite{Sjlander2019}). In the frequentist paradigm, methods are so designed that the adjustment of the significance level of each test leads to the control of some false positive error metrices like $FWER$ (\cite{sidak67, Holm79, HC88, Hoch88, Hom88, Rom90, DuSV04, NiYKS21}), $FDR$ (\cite{BH95, BY01, BY05, BKY06, GL02, Sar02, Sar04, Sar06, Sar08a, Storey02, STS04, FDR07, FDR08, GBS09}) or $k-FWER$ ( \cite{LR2005, DVLP04, VLDP04}). And a multiple testing method is considered optimal if some measure of true positive is maximized among all the methods for which the measure of false positives are controlled at some fixed level. Meanwhile Bayesian methods tend to minimize the Bayes risk  based on certain loss function. Often Bayesian optimal rules are optimal in a classical sense too (\cite{SC07, SC09, CS09, Sea15}).  
	
	With extensive simulation study in \cite{Bogdanetal07, bogdan08} showed that performance of the \cite{BH95} and some empirical bayes methods are equivalent under sparsity when the test statistics follow independent gaussian distribution. Later \cite{bogdanetal11, FB13} showed theoretically that performance of the \cite{BH95} is asymptotically equivalent to the bayes oracle under different sparsity conditions. This too was also proved under an independent gaussian assumptoion. Several extension of \cite{bogdanetal11} has been studied in the literature. The Subbotin family of distribution for test statistics was shown to yield similar results (\cite{NR12}). A full Bayes' treatment was adopted for 0-1 risk in \cite{DG13} with assumption of a Horseshoe prior on the parameters.
	
	In this paper we have considered an equicorrelated multivariate normal distribution for the test statistics. In recent works on multiple testing for similar distributional assumption, \cite{DB21} provided a large scale upper bound on the $FWER$ value obtained for bonferroni corrections; later \cite{dey21} derived an upper bound for the same but in a non-asymptotic set up. 
	
	 We have tried to extend the results in \cite{bogdanetal11} for the equicorrelated set up with a little adjustment in the assumptions. The article is organised in the following way: in Section \ref{section 2}, the underlying set up is described. Lemma \ref{lemma1} is also mentioned here which enables us to use the theorems in \cite{bogdanetal11} directly. And finally we have stated the Assuption \ref{asmtn1} which is a slightly changed version of Assumption A in \cite{bogdanetal11}.  In Section \ref{section 3} we stated and proved the main result describing the conditions for single cutoff tests to be ABOS. In section \ref{section 4} the conditions for single cutoff tests to asymptotically control BFDR is described. We finally showed that the Benjamini Hochberg procedure and Bonferonni adjustment of multiple testing are ABOS in certain sparsity set ups as long as the conditions for controlling BFDR are satisfied. We conclude with some discussions in Section \ref{section 5}.
	 
	\section{Setup and assumptions:} \label{section 2}
	Suppose the observations are represented as an $m$ dimensional vector $\mathbf{X'}=(X_1,X_2,\cdots,$ $X_m)$. Let $\mathbf{X}\sim MVN(\boldsymbol{\mu},\Sigma_\epsilon)$. Here $\boldsymbol{\mu'}=(\mu_1,\mu_2,\cdots,\mu_m)$ and 
	$(\Sigma_\epsilon)_{ij} = \sigma_\epsilon^2 \rho ( \sigma_\epsilon^2 ) $ if $i\neq j ( i= j) $. $\mu_i (\in\mathbb{R})$ represents effect under investigation; $\sigma_\epsilon^2(>0)$ is the variance of the random noise and $\rho(\in[0,1])$ is the equal correlation between each pair of  coordinates. We assume that marginally $\mu_i \sim N(0,\sigma_i^2) \ i=1(1)m$  and are independent. Their distribution depends on unobservable random vector $\boldsymbol{\theta'}=(\theta_1,\theta_2,\cdots,\theta_m)$ with $\theta_i\stackrel{iid}{\sim} \text{Bernoulli}(p)$ such that: \begin{equation*}
		\sigma_i^2=\begin{cases}
		\sigma_0^2 & \text{ if } \theta_i = 0\\
		\sigma_0^2 + \tau^2 & \text{ if } \theta_i = 1
	\end{cases} \  \forall 1(1)m.
	\end{equation*} Here $\sigma^2_0\geq0$ represents the small null effects exerted by each coordinates and $\tau^2>0$ is representative of the high impact of significant coordinates. \cite{bogdanetal11} presents a great account on the rationality of these assumtions. Our objective here is to find out which coordinates has significant impact on the observations. i.e., we aim to test the following hypotheses \begin{equation*}
	H_{0i}: \mu_i \sim N(0,\sigma_0^2) \text{ vs. }  H_{Ai}: \mu_i \sim N(0,\sigma_0^2+\tau^2) \ \forall i = 1(1)m.
\end{equation*}
	
	Marginally, 
	\begin{equation*}
		\mu_i \stackrel{iid}{\sim} p N(0,\sigma_0^2+\tau^2) + (1-p) N(0,\sigma_0^2) \ \forall i = 1(1)m.
	\end{equation*}
	
	
	The following lemma decomposes equicorrelated normal variables into independent normal variables.
	
	\begin{lma}
		\label{lemma1}
		Let $\mathbf{U'} = (U_1, U_2, \cdots , U_n) \sim MVN (0,(1 -r)I + r \mathbf{11}' ).$ Then,
		\[
		\exists
		\begin{rcases}
			Z_i &\stackrel{iid}{\sim}  N(0,(1-r)) \forall i=1(1)m\\
			V & \sim N(0, r)
		\end{rcases}\text{independent}
		\]
		Such that, $(U_1, U_2, \cdots , U_n) \stackrel{d}{=} (Z_1 + V, Z_2 + V,\cdots , Z_n + V ) \forall i = 1(1)n$
	\end{lma}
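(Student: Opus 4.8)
The plan is to exploit the fact that a multivariate normal law is completely determined by its mean vector and covariance matrix, so it suffices to exhibit the claimed representation at the level of first and second moments and then invoke uniqueness. On an auxiliary probability space, introduce mutually independent random variables $Z_1,\dots,Z_n\stackrel{iid}{\sim}N(0,1-r)$ and $V\sim N(0,r)$; this is well defined precisely because $r\in[0,1]$, so both $1-r$ and $r$ are legitimate variances (in the setup of this paper $r=\rho\in[0,1]$, so there is nothing to check). Set $W_i=Z_i+V$ for $i=1,\dots,n$ and $\mathbf{W}'=(W_1,\dots,W_n)$.

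Next I would verify that $\mathbf{W}$ has exactly the target distribution. Since $(Z_1,\dots,Z_n,V)$ is a vector of independent Gaussians it is jointly Gaussian, and $\mathbf{W}$ is a linear image of it, hence multivariate normal. Its mean is $\mathbb{E}[W_i]=\mathbb{E}[Z_i]+\mathbb{E}[V]=0$. For the covariance, independence of the $Z_i$'s and $V$ gives, for $i=j$, $\mathrm{Var}(W_i)=\mathrm{Var}(Z_i)+\mathrm{Var}(V)=(1-r)+r=1$, while for $i\neq j$, $\mathrm{Cov}(W_i,W_j)=\mathrm{Cov}(Z_i,Z_j)+\mathrm{Cov}(Z_i,V)+\mathrm{Cov}(V,Z_j)+\mathrm{Var}(V)=r$. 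Thus the covariance matrix of $\mathbf{W}$ is $(1-r)I+r\mathbf{11}'$, which coincides with that of $\mathbf{U}$.

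Finally, since $\mathbf{U}$ and $\mathbf{W}$ are both multivariate normal with the same mean vector $0$ and the same covariance matrix $(1-r)I+r\mathbf{11}'$, they are equal in distribution, i.e. $(U_1,\dots,U_n)\stackrel{d}{=}(Z_1+V,\dots,Z_n+V)$, which is the assertion; the independence of $\{Z_i\}$ and $V$ and their stated marginal laws are built into the construction, so nothing further is needed.

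I do not expect any real obstacle: the only point requiring a word of care is the nonnegativity of $r$ (and of $1-r$), which guarantees that the building blocks $Z_i$ and $V$ exist, and this simply does not arise here because $r=\rho$ is assumed to lie in $[0,1]$. An alternative route would be to diagonalize $(1-r)I+r\mathbf{11}'$ --- its eigenvalues are $1+(n-1)r$ with eigenvector $\mathbf{1}$ and $1-r$ with multiplicity $n-1$ --- and read the representation off the spectral decomposition, but the direct moment computation above is shorter and more transparent.
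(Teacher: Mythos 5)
Your proof is correct: constructing independent $Z_i\sim N(0,1-r)$ and $V\sim N(0,r)$, checking that $Z_i+V$ has mean zero and covariance $(1-r)I+r\mathbf{11}'$, and invoking the fact that a multivariate normal law is determined by its first two moments is exactly the straightforward argument the paper alludes to when it omits the proof of Lemma \ref{lemma1}. Your remark that $r\in[0,1]$ is what makes both variances legitimate is also the right (and only) point of care.
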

	Proof of Lemma \ref{lemma1} is straightforward and therefore omitted.
	By Lemma \ref{lemma1}, we can write, given $\boldsymbol{\mu}$ and $\boldsymbol{\theta}$, 
	
	\begin{equation}
		\label{decomp}
		X_i = Z_i + Q, \ \forall i=1(1)m
	\end{equation}
	where, 
	\[
	\begin{rcases}
		Z_i & \stackrel{indep}{\sim} N(\mu_i,\sigma^2_\epsilon(1-\rho))\\
		Q & \sim N(0, \sigma^2_\epsilon\rho)
	\end{rcases} \text{independent}
	\]
	
	Note that, the observation $\mathbf{X}$ depends on the parameter in question i.e., $\boldsymbol{\theta}$ only through $\boldsymbol{\mu}$. Now due to \ref{decomp}, we have, $\mathbf{X} = \mathbf{Z} + Q \mathbf{1}$, where $\mathbf{Z'} = (Z_1,Z_2, \cdots, Z_m)$. Since $Q$ and $\mathbf{Z}$ are independent and distribution of $\mathbf{Z}$ involves $\boldsymbol{\mu}$ but that of $Q$ is free from $\boldsymbol{\mu}$, it is evident that $\mathbf{Z}$ is a sufficient statistic and $Q$ is an ancilliary statistic for parameter $\boldsymbol{\mu}$ and therefore $\boldsymbol{\theta}$. Therefore, inference on  $\boldsymbol{\theta}$ based on $\mathbf{Z}$ will convey the same information as inference based on $\mathbf{X}$ would have. Considering $\mathbf{Z}$ as our observations, our problem can be rewrittrn as follows:
	
	Suppose,$\forall i =1(1)m$, \begin{align} 
		Z_i \stackrel{indep}{\sim} & N(\mu_i,\sigma_{\epsilon,\rho}^2) \label{eqnstrt}\\
		\mu_i \stackrel{indep}{\sim} & \begin{cases}
			N(0,\sigma_0^2) & \text{if } \theta_i = 0\\
			N(0,\sigma_0^2+\tau^2) & \text{if } \theta_i =1
		\end{cases}\\
		\theta_i \stackrel{iid}{\sim} &\text{ Bernoulli}(p)\label{eqnend}
	\end{align}
	Where $\sigma_{\epsilon,\rho}^2=\sigma^2_\epsilon(1-\rho)$. Our aim is to test the series of $m$ hypotheses $H_{0i}: \theta_i =0$ vs. $H_{Ai} : \theta_i =1$ $\forall i = 1(1)m$. The problem described in \ref{eqnstrt} to \ref{eqnend} along with the $m$ pairs of hypotheses to be tested is the same problem considered in \cite{bogdanetal11}. The only difference is that, the observations $\mathbf{Z}$ considered here is actually an unobservable theoretical random vector. However, if we make appropriate modifications in the assumptions considered in \cite{bogdanetal11}, the results obtained there will go through here too.
	
	Suppose $\delta_0$ ($\delta_A$) be the common loss incurred for each coordinates ($i$) when $H_{0i}$ ($H_{Ai}$) is falsely rejected. Then the Bayes rule for $i$-th coordinate will reject $H_{0i}$ when:
	
	\begin{equation}
		\label{bayesrle}
		\frac{\phi_A(Z_i)}{\phi_0(Z_i)} \geq \frac{(1-p)\delta_0}{p\delta_A}
	\end{equation}
	
	where $\phi_A$ and $\phi_0$ are respectively the alternative and null distributions of $Z_i$. The optimal rule in \ref{bayesrle} can be simplified as:
	\begin{equation}
		\text{Reject } H_{0i} \qquad \text{if } \frac{Z_i^2}{\sigma^2}\geq c^2
	\end{equation}
	
	where $\sigma^2=\sigma_{\epsilon,\rho}^2+\sigma_0^2$ and 
	\begin{equation}
		c^2 = c_{\tau,\sigma,p,\delta_A,\delta_0}^2 = \frac{\sigma^2+\tau^2}{\tau^2}\big( \log\big(\big(\frac{\tau}{\sigma}\big)^2+1\big)+2\log\big(\frac{(1-p)\delta_0}{p\delta_A}\big)\big)
	\end{equation}
	For the sake of simpliciy and convenience, let, $ u = \big(\frac{\tau}{\sigma}\big)^2 $ and $ v = u \big(\frac{(1-p)\delta_0}{p\delta_A}\big)^2$. Then the cutoff $c^2$ can be rewritten as:
	\begin{equation}
		c^2= c_{u,v}^2= \big(1+\frac{1}{u}\big)(\log v + \log(1+\frac{1}{u}))
	\end{equation}
	
	
	We assume that the equal correlation ($\rho$) between different coordinate pairs of the original observation vector remain constant. Let $\boldsymbol{\gamma_t'}=(\delta_{0t},\delta_{At}, p_t, \sigma_{\epsilon t},\sigma_{0t})$ be a sequence of parameter vectors. Asymptotic optimality of the multiple testing shall be examined as $t \uparrow \infty$.  In particular, one interesting case is when the proportion of alternatives $p_t\rightarrow 0$ as $t\uparrow \infty$. The discussion in section 2.1 in \cite{bogdanetal11}, justifies some additional assumptions on the asymptotic framework too.  
	\begin{asmtn} \label{asmtn1}
		The sequence of parameter vectors $\boldsymbol{\gamma_t}$ is said to follow the asymptotic framework of ABOS if,
		\begin{enumerate*}
			\item $ p_t \rightarrow 0$,\item $ u_t \rightarrow \infty$,\item $ v_t \rightarrow \infty$ \&
			\item $\frac{\log v_t}{u_t} \rightarrow C\in (0,\infty)$.
		\end{enumerate*} 
		
		Here, the change in the parameters with the number of coordinates $m$ shall be studied. i.e. the subscript $t$ is replaced with $m$. 2 different sparse case shall be considered; the first being extremely sparse case as chracterized by 
		
		\begin{equation} 
			\label{extrmsprs}
			mp_m \rightarrow s\in(0,\infty] \quad \text{and } \quad \frac{\log(mp_m)}{\log(m)}\rightarrow 0
		\end{equation} 
		
		For example, the case $p_m\propto \frac{1}{m}$ satisfies conditions of extreme sparse alternative. The second ``denser'' case is characterized by:
		\begin{equation} 
			\label{mdmsprs}
			p_m \rightarrow 0 \quad \text{and } \quad \frac{\log(mp_m)}{\log(m)}\rightarrow C_p \in (0,1]
		\end{equation} 
	\end{asmtn} 
	\section{Asymptotic Bayes Optimality under Sparsity(ABOS):} \label{section 3}
	According to Theorem 3.2 in \cite{bogdanetal11}, given assumption \ref{asmtn1} is true, any rule that rejects $H_{0i}$ if $Z_i^2\geq c_m \ \forall i = 1(1)m$ where, $c_m =\log v_m + l_m$, with, 
	\begin{align}
		\label{lm1} l_m =  o(\log v_m)\\
		\label{lm2} l_m + 2\log \log v_m \rightarrow \infty 
	\end{align}
	will be an asymptotically bayes optimal test under sparsity.
	
	But, as mentioned earlier, we cannot directly observe $\mathbf{Z}$. However, averaging \ref{decomp} for all the $m$ coordinates,
	
	\begin{equation}
		\label{decompbar}
		\bar{X}= Q + \bar{Z} 
	\end{equation}
	Subtracting \ref{decompbar} from \ref{decomp} we get,
	\begin{equation}
		X_i-\bar{X}= Z_i-\bar{Z} \ \forall i = 1(1)m
	\end{equation}
	Let, $U_i = Z_i -\bar{Z} \ \forall i = 1(1)m.$ It is evident from \ref{decompbar} that $\mathbf{U'}=(U_1,U_2,\cdots,U_m)$ can be easily obtained from $\mathbf{X}$ as, $ \mathbf{U}=\mathbf{X} - (\frac{1}{n}\mathbf{X'1}) \mathbf{1}$.
	
	\begin{thm}
		\label{thm1}
		Let assumption \ref{asmtn1} holds. Consider any testing rule of the form: \begin{equation*}
			\forall i=1(1)m \text{ reject } H_{0i} \text{ if } (\frac{U_i}{\sigma})^2 \geq c_m
		\end{equation*}
		then if $\log (\frac{\delta_m}{p_m}) =  o_m( m)$, such a rule will be $ABOS$- iff,  $c_m = \log v_m + l_m $ and $l_m$ follows \ref{lm1} and \ref{lm2}. 
	\end{thm}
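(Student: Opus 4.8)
The plan is to deduce the statement from Theorem~3.2 of \cite{bogdanetal11} by a coordinatewise comparison of error probabilities. Let $\Phi$ be the standard normal distribution function, and recall that the Bayes risk of any single‑cutoff rule splits, up to the shared weights, as $R_m=(1-p_m)\delta_{0m}\,t_{1,m}+p_m\delta_{Am}\,t_{2,m}$, where $t_{1,m},t_{2,m}$ are the (by exchangeability, coordinate‑free) type~I and type~II error probabilities and the Bayes‑oracle risk is of exact order $p_m\delta_{Am}$. Inspection of the proof of Theorem~3.2 in \cite{bogdanetal11} shows that, for the rule ``reject $H_{0i}$ iff $Z_i^2/\sigma^2\ge c_m$'', the condition \ref{lm1} is equivalent to $t_{2,m}\to 2\Phi(\sqrt{C})-1$ (the oracle limit), while \ref{lm2} is equivalent to $(1-p_m)\delta_{0m}t_{1,m}=o(p_m\delta_{Am})$. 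Hence it suffices to establish these same two equivalences with $U_i$ in place of $Z_i$: ABOS of the $U_i$‑rule then holds exactly when both \ref{lm1} and \ref{lm2} hold and fails as soon as one of them fails, which is the claim.

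First I would record the law of $U_i$. Given $\boldsymbol\theta$ the $Z_j$ are independent $N(0,\sigma^2+\theta_j\tau_m^2)$, so, writing $S_{-i}=\sum_{j\ne i}\theta_j$, a one‑line variance computation gives
\[
U_i\mid\boldsymbol\theta\ \sim\ N\!\left(0,\ \sigma^2\big(1-1/m+(u_m/m^2)\,S_{-i}\big)\right)\ \ \text{under }H_{0i},
\]
with the variance enlarged by the deterministic amount $\sigma^2u_m(1-1/m)^2$ under $H_{Ai}$. Thus subtracting the common term $\bar Z$ has the sole effect of perturbing the variance multiplier of $U_i^2/\sigma^2$, relative to the corresponding multiplier for $Z_i^2/\sigma^2$, by $O(1/m)+O(u_mS_{-i}/m^2)$, the mean staying $0$. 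Consequently $t_{1,m}$ and $t_{2,m}$ are mixtures, over $S_{-i}\sim\mathrm{Bin}(m-1,p_m)$, of $\chi^2_1$ upper‑tail, respectively central, probabilities evaluated at $c_m$ divided by these perturbed multipliers.

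For $t_{2,m}$ this is short work: since $0\le S_{-i}\le m$, the $H_{Ai}$ variance multiplier divided by $1+u_m$ differs from $1$ by $O(1/m)$ uniformly in $S_{-i}$, so the argument of $\Phi$ equals $\sqrt{c_m/(1+u_m)}$ up to a $1+o(1)$ factor, and $c_m/(1+u_m)\to C$ precisely under \ref{lm1}; this yields $t_{2,m}\to 2\Phi(\sqrt{C})-1$ under \ref{lm1} and a different limit otherwise. The hard part will be $t_{1,m}$, because the $\chi^2_1$ upper tail at the growing level $c_m$ obeys $\Pr(\chi^2_1\ge c_m\kappa^{-1})=\Pr(\chi^2_1\ge c_m)\exp\{\tfrac12 c_m(1-\kappa^{-1})+o(1)\}$ as $\kappa\to1$, so even a tiny perturbation $\epsilon$ of the variance multiplier is amplified by a factor $e^{\Theta(c_m\epsilon)}$. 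I would control $t_{1,m}$ either by bounding the $\chi^2_1$ tail by $C' c_m^{-1/2}e^{-c_m/2}$ and invoking the explicit moment generating function of $S_{-i}\sim\mathrm{Bin}(m-1,p_m)$, or by peeling off a binomially unlikely ``$S_{-i}$ large'' event on which the integrand is at most $1$; either route needs $c_mu_mp_m/m\to0$ and $c_mu_m=o(m^2)$ so the perturbation is asymptotically harmless, plus $(\delta_m/p_m)\,\Pr(S_{-i}\text{ large})\to0$ so the unlikely event is negligible after weighting. This is exactly where the hypothesis $\log(\delta_m/p_m)=o(m)$ enters: since $c_m\sim\log v_m\sim 2\log(\delta_m/p_m)$ it is equivalent to ``$c_m=o(m)$'', hence ``$u_m=o(m)$'', and, combined with the rate of $mp_m$ supplied by \ref{extrmsprs} or \ref{mdmsprs}, it delivers precisely the smallness conditions just listed. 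The conclusion is $t_{1,m}=\Pr(\chi^2_1\ge c_m)(1+o(1))$ — the type~I error of the $Z_i$‑rule — so that $(1-p_m)\delta_{0m}t_{1,m}=o(p_m\delta_{Am})$ iff $2\log(\delta_m/p_m)-c_m-\log c_m\to-\infty$ iff $l_m+2\log\log v_m\to\infty$, i.e.\ iff \ref{lm2}.

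Combining the pieces: under \ref{lm1}--\ref{lm2} the errors of the $U_i$‑rule agree to first order with those of the ABOS $Z_i$‑rule, so the $U_i$‑rule is ABOS; and if \ref{lm1} (resp.\ \ref{lm2}) fails then $t_{2,m}$ misses the oracle constant (resp.\ the weighted type~I error fails to be $o(p_m\delta_{Am})$), so it is not ABOS. I expect the genuine obstacle to be the type~I estimate above — tracking how the $O(u_mp_m/m)$ and $O(1/m)$ perturbations of $\operatorname{Var}(U_i)$ propagate through the Gaussian tail at the diverging cutoff $c_m$, and verifying that $\log(\delta_m/p_m)=o(m)$ together with the two sparsity regimes is precisely what makes both the perturbation factor $1+o(1)$ and the ``$S_{-i}$ large'' remainder negligible after weighting by $(1-p_m)\delta_{0m}/(p_m\delta_{Am})$. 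The conditional law of $U_i$, the type~II computation, and the necessity direction are then routine bookkeeping.
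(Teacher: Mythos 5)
Your plan is sound and reaches the theorem by a genuinely different route from the paper's. The paper argues pathwise: it rewrites the event $(U_i/\sigma)^2\geq c_m$ as $Z_i^2/\sigma^2\geq c_m+\bar Z^2/\sigma^2-2\sqrt{c_m}|\bar Z|/\sigma$, proves a small lemma showing $\sqrt{c_m}|\bar Z|/\sigma=o_m(1)$ almost surely (via $\sqrt m\,\bar Z/(\sigma\sigma_m)\sim N(0,1)$ together with $\sqrt{c_m}=o(\sqrt m)$, which is exactly where $\log(\delta_m/p_m)=o_m(m)$ enters), and then applies Theorem~3.2 of \cite{bogdanetal11} to the almost-surely-perturbed threshold $\log v_m+l_{1m}$, closing with the remark that conditions holding on a probability-one event suffice for the ratio of risks to converge. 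You instead stay at the level of distributions: you compute $\operatorname{Var}(U_i\mid\boldsymbol{\theta})$ exactly, write $t_{1,m}$ and $t_{2,m}$ as $\mathrm{Bin}(m-1,p_m)$-mixtures of $\chi^2_1$ tail probabilities at perturbed arguments, and control the type~I tail through the binomial moment generating function (or a peeling argument). What your route buys is precisely the step the paper leaves informal: the Bayes risk is an \emph{expectation} in which $t_{1,m}$ is weighted by the factor $f_m\delta_m$, which is of order $e^{c_m/2}$ up to subexponential corrections, so an almost-sure $o(1)$ perturbation of the threshold does not by itself control the expected type~I contribution; your explicit tail/MGF estimate does, at the cost of the delicate $e^{\Theta(c_m\epsilon)}$ amplification analysis you correctly identify as the main obstacle. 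What the paper's route buys is brevity and verbatim reuse of \cite{bogdanetal11}. One caveat, which is shared rather than yours alone: both arguments implicitly require $c_mu_mp_m/m\rightarrow 0$ — you list it among your smallness conditions, while the paper buries it in the unproved claim that $\sigma_m^2=1+p_mu_m\rightarrow 1$ — and neither follows from Assumption \ref{asmtn1} and $\log(\delta_m/p_m)=o_m(m)$ alone if $\delta_m$ is allowed to grow very fast (e.g.\ $p_m=1/\log m$ with $\log(\delta_m/p_m)\asymp m/\log\log m$); so this is a hidden hypothesis common to both proofs, not a defect of your plan relative to the paper's.
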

	\begin{proof}
		Note that, $(\frac{U_i}{\sigma})^2\geq c_m$ iff $\frac{Z_i^2}{\sigma^2}\geq \min((\sqrt{c_m}-\frac{\bar{Z}}{\sigma})^2,(\sqrt{c_m}+\frac{\bar{Z}}{\sigma})^2)$. i.e., $\frac{Z_i^2}{\sigma^2}\geq c_m+\frac{\bar{Z}^2}{\sigma^2}-2\frac{\sqrt{c_m}}{\sigma}|\bar{Z}|$. Now, since, $\frac{Z_i}{\sigma}\stackrel{iid}{\sim}N(0,p_m(1+u_m)+(1-p_m))$, by SLLN, $\bar{Z}\stackrel{a.s}{\rightarrow}0$.
		\begin{lma}
			Suppose, assumption \ref{asmtn1} is satisfied  and $c_m$ is defined as in \ref{thm1} Then,
			$\sqrt{c_m}\frac{|\bar{Z}|}{\sigma}=o_m(1)$ w.p.1. if, $\log (\frac{\delta_m}{p_m}) = o_m( m)$.
		\end{lma}
		\begin{proof}
			Assumption \ref{asmtn1} implies that,
			\begin{equation}
				\label{asm2}
				\frac{\log v_m}{u_m} \rightarrow C \in (0,\infty).
			\end{equation}
			Now, $v_m=u_m f_m^2 \delta_m ^2$, where, $f_m = \frac{1-p_m}{p_m} $ and $\delta_m = \frac{\delta_0}{\delta_A}$.
			So by simple calculations and due to the fact that $p_m \rightarrow 0$, we get 
			\begin{equation}
				\log \delta_m - \log p_m -\frac{C}{2} u_m \rightarrow 0
			\end{equation}
			i.e., 
			\begin{equation}
				\label{logvpf}
				2\log (\frac{\delta_m}{p_m}) -\log v_m \rightarrow 0
			\end{equation}
			
			Now, $\frac{Z_i}{\sigma}\stackrel{iid}{\sim} N(0,\sigma_m^2)$ where, $\sigma_m^2 = p_m(1+u_m)+(1-p_m)$. It is easy to see that $\sigma_m^2\rightarrow 1$. So, $\frac{\sqrt{m}\bar{Z}}{\sigma_m\sigma}\sim N(0,1)$. Therefore, $\frac{a_m\bar{Z}}{\sigma}\rightarrow 0 $ a.s. where $a_m=o_m(\sqrt{m})$.
			
			By definition,  $c_m = \log v_m + l_m $ where $l_m$ follows \ref{lm1} and \ref{lm2}. So if  $ \log (\frac{\delta_m}{p_m}) = o_m( m)$, then due to \ref{logvpf},
			
			\begin{equation}
				\sqrt{c_m}=o_m(\sqrt{m})
			\end{equation}	
			which proves the lemma.
		\end{proof}
		
		Now, by SLLN, $(\frac{\bar{Z}}{\sigma})^2 \rightarrow 0$ a.s. i.e., $(\frac{\bar{Z}}{\sigma})^2 = o_m(1).$ a.s. So the final cutoff is,\begin{equation}
			\begin{split}
				c_{1m}=& c_m+\frac{\bar{Z}^2}{\sigma^2}-2\frac{\sqrt{c_m}}{\sigma}|\bar{Z}|\\
				= & \log v_m + l_m + 
				\frac{\bar{Z}^2}{\sigma^2}-2\frac{\sqrt{c_m}}{\sigma}|\bar{Z}|\\
				= & 
				\log v_m + l_{1m} 
			\end{split}
		\end{equation}
		where, $l_{1m}$ almost surely follows the required assumptions in Theorem 3.2 in \cite{bogdanetal11}. Finally, since an $ABOS$ rule requires ratio of two risks (i.e. expectations) to converge to 1, if for a rule the conditions are satisfied on an event with probability 1 (or on a sequence of events, whose probability converges to 1) the rule will be $ABOS$. Hence the Theorem is proved.
	\end{proof}
	\section{ABOS for Bayesian FDR control:} \label{section 4} As defined by \textbf{Efron and Tibshirani(2002)} Bayesian FDR ($BFDR$) is defined as: 
	\begin{equation}
		\label{bfdr}
		BFDR = \mathbb{P}(H_{0i} \text{ is true }|H_{0i} \text{ is rejected })=\frac{(1-p)t_1}{(1-p)t_1+p(1-t_2)}
	\end{equation}
	Where as mentioned before, $t_1$ and $t_2$ are respectively the probabilities of type $I$ and type $II$ errors. Now, due to the MLR property and N-P lemma, rejection of $H_{0i}$ is justified for large values of $\frac{Z_i^2}{\sigma^2}.$ Since $U_i^2=Z^2_i + O(\bar{Z})$, and $\bar{Z}\rightarrow 0$ almost surely, for large $m$, a high value of $\frac{U_i^2}{\sigma^2}$ will almost surely indicate the falsehood of $H_{0i}$ and therefore we shall reject $H_{0i}$ if $\frac{U_i^2}{\sigma^2}$ exceeds some fixed constant $c_B^2$. According to Therorem \ref{thm1}, for the test to be ABOS, $c_B$ must follow the properties of the cutoff value mentioned in theorem \ref{thm1}. i.e., we require $\frac{c_B}{\sqrt{u_m+1}}\rightarrow \sqrt{C}$. $C$ being the constant mentioned in Assuption \ref{asmtn1}. Expressions of $t_1$ and $t_2$ for the rule: ``Reject $H_{0i}$ when $\frac{Z_i^2}{\sigma^2}\geq C_m$" can be obtained as:
	\begin{align}
		t_1 = & 2(1-\Phi(\sqrt{C_m}))\\
		t_2 = & 2\Phi(\frac{\sqrt{C_m}}{\sqrt{u_m+1}})-1
	\end{align} 
	And since $U_i^2=Z^2_i + O(\bar{Z})$, and $\bar{Z}\rightarrow 0$ almost surely, the rule  ``Reject $H_{0i}$ when $\frac{U_i^2}{\sigma^2}\geq c_B^2$" will have $t_1$ and $t_2$ as 
	\begin{align}
		t_1 = & 2(1-\Phi(c_B))(1+o_m(1))\\
		t_2 = & (2\Phi(\frac{c_B}{\sqrt{u_m+1}})-1)(1+o_m(1))
	\end{align}almost surely. 
	Equationg the expression \ref{bfdr} of $BFDR$ to $\alpha=\alpha_m$, we get ,
	
	\begin{equation}
		\label{bfdrabos1}
		\frac{(1-\Phi(c_B))}{1-\Phi(\frac{c_B}{\sqrt{u_m+1}})} = \frac{r_{\alpha_m}}{f}
	\end{equation}
	where, $r_\alpha=\frac{\alpha}{(1-\alpha)}$.
	
	From the above discussion, it is evident that an additional assumption to $c_B$ following assumptons in theorem \ref{thm1} would be: 
	\begin{equation}
		\frac{r_{\alpha_m}}{f} \rightarrow 0
	\end{equation}
	
	The following theorem sums up the requirement of a fixed threshold rule of the form ``Reject $H_{0i}$ when $\frac{U_i^2}{\sigma^2}\geq c_B^2$" with $BFDR \ \alpha=\alpha_m$ to be ABOS almost surely. 
	
	\begin{thm}
		\label{thm2}
		Suppose assumption \ref{asmtn1} is satisfied. Consider any fixed threshold rule with $BFDR \ \alpha = \alpha_m \in (0,1)$ with cutoff $c_B^2$. Let, 
		
		\begin{equation}
			\frac{\log(f_m\delta_m\sqrt {u_m})}{\log(f_m/r_{\alpha_m})} = 1+s_m
		\end{equation}
		Then  if, 
		\begin{equation}
			\label{thm23}
			\log (\frac{\delta_m}{p_m}) = o_m(m)
		\end{equation}
		the rule is $ABOS$ iff 
		\begin{equation}
			\label{thm21}
			s_m \rightarrow 0
		\end{equation}
		
		\begin{equation}
			\label{thm22}
			2s_m\log(\frac{f_m}{r_{\alpha_m}})-\log \log (\frac{f_m}{r_{\alpha_m}})\rightarrow -\infty
		\end{equation} 
		and finally 
		
	\end{thm}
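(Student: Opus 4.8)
The plan is to reduce Theorem~\ref{thm2} to Theorem~\ref{thm1} by way of the Bayesian--FDR constraint \ref{bfdrabos1}. By Theorem~\ref{thm1}, under the hypothesis \ref{thm23} the fixed-threshold rule ``reject $H_{0i}$ if $(U_i/\sigma)^2\ge c_B^2$'' is $ABOS$ almost surely if and only if $c_B^2=\log v_m+l_m$ with $l_m$ satisfying \ref{lm1} and \ref{lm2}. Hence it is enough to translate the three requirements --- $c_B^2=\log v_m+l_m$, \ref{lm1}, \ref{lm2} --- into the stated conditions on $s_m$ and $\alpha_m$, using the equation that $BFDR=\alpha_m$ imposes on $c_B$. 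Write $L_m:=\log(f_m/r_{\alpha_m})$ throughout.

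The first ingredient is the exact identity $\log v_m=2\log\!\bigl(f_m\delta_m\sqrt{u_m}\bigr)$, immediate from $v_m=u_mf_m^2\delta_m^2$, which turns the defining relation of the theorem into $\log v_m=2(1+s_m)L_m$. The second is an asymptotic expansion of $c_B^2$. Taking logarithms in \ref{bfdrabos1} --- which, recall, holds modulo an almost-sure $(1+o_m(1))$ factor coming from $t_1,t_2$, absorbed into an $o(1)$ --- gives
\[
-\log\bigl(1-\Phi(c_B)\bigr)=L_m-\log\Bigl(1-\Phi\bigl(c_B/\sqrt{u_m+1}\bigr)\Bigr)+o(1).
\]
Since the third listed condition, $r_{\alpha_m}/f_m\to0$, is equivalent to $L_m\to\infty$, and $1-\Phi(c_B/\sqrt{u_m+1})\le\tfrac12$, the right side diverges, so $c_B\to\infty$. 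One then shows $c_B/\sqrt{u_m+1}$ remains bounded: otherwise, along a subsequence $w_m:=c_B/\sqrt{u_m+1}\to\infty$, and Mills' ratio $1-\Phi(x)=\tfrac{\phi(x)}{x}(1+o(1))$ applied to both sides of the display yields $\tfrac12u_mw_m^2=L_m+O(\log u_m)$, forcing $L_m/u_m\to\infty$; but \ref{asm2} (part of Assumption~\ref{asmtn1}) together with $\log v_m=2(1+s_m)L_m$ and $s_m\to0$ gives $L_m/u_m\to C/2<\infty$. With $1-\Phi(c_B/\sqrt{u_m+1})$ thus bounded away from $0$, Mills' ratio sharpens the display to $-\log(1-\Phi(c_B))=\tfrac12c_B^2+\log c_B+O(1)$; plugging back the leading order $c_B^2\sim2L_m$ gives $c_B^2=2L_m-\log L_m+O(1)$.

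Now set $l_m:=c_B^2-\log v_m=-2s_mL_m-\log L_m+O(1)$. Comparing $c_B^2\sim2L_m$ with $c_B^2=\log v_m(1+o(1))=2(1+s_m)L_m(1+o(1))$ shows that \ref{lm1} forces $s_m\to0$, which is \ref{thm21}; conversely, if $s_m\to0$ then $l_m/\log v_m=\bigl(-2s_m-\tfrac{\log L_m+O(1)}{L_m}\bigr)\big/\bigl(2(1+s_m)\bigr)\to0$, so \ref{lm1} holds. Granting $s_m\to0$, $\log\log v_m=\log L_m+O(1)$, hence $l_m+2\log\log v_m=-2s_mL_m+\log L_m+O(1)$, so that \ref{lm2} reads exactly $2s_m\log(f_m/r_{\alpha_m})-\log\log(f_m/r_{\alpha_m})\to-\infty$, i.e. \ref{thm22}. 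In the ``only if'' direction the same chain runs backwards: $ABOS$ forces $c_B^2\sim\log v_m\sim Cu_m$, hence $c_B/\sqrt{u_m+1}\to\sqrt C$ directly (reconciling the expansion with the discussion preceding the theorem) and $L_m\to\infty$ via the $BFDR$ relation, so the expansion of $c_B^2$ is available and \ref{lm1}--\ref{lm2} yield \ref{thm21}--\ref{thm22}.

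The step I expect to be the main obstacle is precisely the boundedness of $c_B/\sqrt{u_m+1}$ in the ``if'' direction: one must exclude the degenerate regime in which the power side of \ref{bfdrabos1} also collapses, and this cannot be obtained by circularly assuming the $ABOS$ conclusion --- it is \ref{asm2}, tying $L_m$ to the order of $u_m$, that closes the gap. Everything else is routine bookkeeping with Mills'-ratio expansions and with the almost-sure $(1+o_m(1))$ corrections to $t_1,t_2$ coming from the replacement of $Z_i$ by $U_i$, already handled in the proof of Theorem~\ref{thm1}.
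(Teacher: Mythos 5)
Your proposal is correct and follows essentially the same route as the paper: reduce to Theorem~\ref{thm1} via the almost-sure $o_m(1)$ perturbation of the cutoff coming from \ref{thm23}, and then show that the $BFDR$ constraint \ref{bfdrabos1} puts $c_B^2$ in the form $\log v_m + l_m$ with \ref{lm1}--\ref{lm2} exactly when \ref{thm21} and \ref{thm22} hold. The only difference is that where the paper simply cites Theorem~4.1 and Section~10 of the supplement of \cite{bogdanetal11} for the Mills'-ratio expansion $c_B^2 = 2\log(f_m/r_{\alpha_m}) - \log\log(f_m/r_{\alpha_m}) + O(1)$ and for the translation into the $s_m$ conditions, you derive both explicitly (including the non-circular boundedness of $c_B/\sqrt{u_m+1}$ from \ref{asm2}), which is correct and makes the argument self-contained.
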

	\begin{proof}
		As discussed earlier, $\frac{U_i^2}{\sigma^2}\geq c_B^2$ iff $\frac{Z_i^2}{\sigma^2}\geq c_B^2 + l_{2m}$. Where, $l_{2m} = o_m(1)$ almost surely, if \ref{thm23} holds. 
		
		Theorem 4.1 in \cite{bogdanetal11} shows that, for a fixed threshold rule that rejects $H_{0i}$ if $\frac{Z_i^2}{\sigma^2}\geq c_m^2$ with $BFDR=\alpha_m$, we have
		
		\begin{equation}
			c_m^2 = 2 \log(\frac{f_m}{r_{\alpha_m}})-\log(2\log(\frac{f_m}{r_{\alpha_m}}))+C_1+o_m
		\end{equation} with $C_1=\log(\frac{2}{\pi D^2})$ and $ D= 2(1-\Phi(\sqrt C))$.	 
		From section 10 in supplimentary to \cite{bogdanetal11} it has been proved that $c_m^2 = \log v_m +l_m$ where $l_m$ follows \ref{lm1} and \ref{lm2} if and only if \ref{thm21} and \ref{thm22} occurs. So, the rest of the proof follows form Theorem \ref{thm1}. 
	\end{proof}
	\subsection{Optimality of the asymptotic approximation to the BH threshold:}  \textbf{Genovese and Wasserman(2002)} proves that if $m \uparrow \infty$, and $p_m$ remains constant, then the threshold of BH can be approximated by
	\begin{equation}
		\label{approxbh}
		c_{GW}: \frac{1-\Phi(c_{GW})}{(1-p)(1-\Phi(c_{GW}))+p(1-\Phi(c_{GW}/\sqrt{u+1})}=\alpha
	\end{equation}
	It is evident that, the expression in \ref{approxbh} lacks the multiple $(1-p)$ in the numerator in the expression of $BFDR$ in \ref{bfdr}. So if $p_m \rightarrow 0$, the cutoff $c_B$ of $BFDR$ control must converge to $c_GW$. The following theorem formalises the claim.
	
	\begin{thm}
		Consider the fixed cutoff rule rejecting $H_{0i}$ if $\frac{U_i^2}{\sigma^2}\geq c_{GW}^2$. If \ref{thm23} holds, then, this rule is $ABOS$ iff the $BFDR$ rule defined in \ref{bfdr} is $ABOS$ and in this case, 
		\begin{equation}
			c_{GW}^2 = c_B^2 + o_m(1)
		\end{equation}
		almost surley.
	\end{thm}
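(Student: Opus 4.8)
The plan is to reduce everything to the two earlier characterizations. First I would recall from the proof of Theorem~\ref{thm1} that the event $A_m = \{\sqrt{c}\,|\bar Z|/\sigma = o_m(1)\} \cap \{(\bar Z/\sigma)^2 = o_m(1)\}$ has probability tending to $1$ (indeed holds with probability one along the sequence) whenever \eqref{thm23} holds, for any cutoff $c$ of the order permitted in Theorem~\ref{thm1}; this is precisely what let us pass between $U_i$-based and $Z_i$-based thresholds. Since $c_{GW}^2$ is of the same order as the BFDR cutoff $c_B^2$ (both are $2\log(f_m/r_{\alpha_m})(1+o_m(1))$ once $r_{\alpha_m}/f_m\to 0$, by Genovese--Wasserman and the analysis preceding Theorem~\ref{thm2}), the same event $A_m$ governs the $c_{GW}$ rule. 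On $A_m$, the rule ``reject $H_{0i}$ if $(U_i/\sigma)^2 \ge c_{GW}^2$'' is equivalent to ``reject if $Z_i^2/\sigma^2 \ge c_{GW}^2 + l_{3m}$'' with $l_{3m} = \bar Z^2/\sigma^2 - 2(\sqrt{c_{GW}}/\sigma)|\bar Z| = o_m(1)$ almost surely.

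Next I would show the numerical equivalence $c_{GW}^2 = c_B^2 + o_m(1)$. Write the two defining equations \eqref{approxbh} and \eqref{bfdr} (with the latter set equal to $\alpha_m$ and rearranged as in \eqref{bfdrabos1}); dividing, one sees the Genovese--Wasserman denominator equals the BFDR denominator plus a term of relative order $p_m \to 0$, and also lacks the overall factor $(1-p_m) \to 1$. So the two right-hand sides agree up to multiplicative $(1+o_m(1))$, which after taking logarithms and using that both cutoffs behave like $\sqrt{2\log(f_m/r_{\alpha_m})}$ translates into an additive $o_m(1)$ discrepancy in the squared cutoffs. I would make this quantitative exactly as Bogdan et al.\ do: invert the tail relation $1-\Phi(c) \approx \phi(c)/c$ to get $c^2$ as a function of the log of the right-hand side, and observe that perturbing the right-hand side by $(1+o_m(1))$ perturbs $c^2$ by $o_m(1)$.

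Combining the two displays, on $A_m$ the $c_{GW}$-rule rejects iff $Z_i^2/\sigma^2 \ge c_B^2 + o_m(1)$, which is exactly the $Z$-side form of the BFDR $U_i$-rule up to an $o_m(1)$ shift of the threshold. By the stability of conditions \eqref{lm1}--\eqref{lm2} under $o_m(1)$ perturbations of $l_m$ (used already in Theorem~\ref{thm1} and Theorem~\ref{thm2}), $c_{GW}^2$ satisfies the hypotheses of Theorem~3.2 of \cite{bogdanetal11} on $A_m$ if and only if $c_B^2$ does; hence one rule is ABOS iff the other is. Finally, since ABOS is a statement about the ratio of two risks (expectations) converging to $1$, and all the above equivalences hold on the event $A_m$ whose probability tends to $1$, the almost-sure qualifier transfers to the risk statement just as in the closing paragraph of the proof of Theorem~\ref{thm1}.

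The main obstacle I expect is the second step: controlling the cutoff discrepancy cleanly enough to get $o_m(1)$ \emph{additively} in the squared scale, rather than merely $o_m(1)$ relatively. This requires knowing that $c_{GW}$ (equivalently $c_B$) is bounded away from $0$ and grows like $\sqrt{\log(f_m/r_{\alpha_m})}\to\infty$, so that a multiplicative $(1+o_m(1))$ on $1-\Phi(c)$ does not blow up when converted to $c^2$; this is guaranteed by $f_m\to\infty$, $r_{\alpha_m}/f_m\to 0$, and Assumption~\ref{asmtn1}, but it must be invoked explicitly. Everything else is bookkeeping with the already-established machinery.
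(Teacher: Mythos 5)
Your proposal is correct and follows essentially the same route as the paper, which simply invokes Lemma \ref{lemma1} (to pass from the $U_i$-based to the $Z_i$-based threshold via the almost-sure negligibility of $\bar Z$ under \ref{thm23}) together with the cutoff comparison $c_{GW}^2 = c_B^2 + o_m(1)$ from Theorem 4.2 of \cite{bogdanetal11}; your write-up merely makes both ingredients explicit. The extra care you flag about converting a multiplicative $(1+o_m(1))$ perturbation of $1-\Phi(c)$ into an additive $o_m(1)$ shift of $c^2$ is exactly the content delegated to that cited theorem, so no new argument is needed.
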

	\begin{proof}
		The proof follows from the proof of Theorem 4.2 in \cite{bogdanetal11}. and Lemma \ref{lemma1}.
	\end{proof}
	
	\subsection{ $ABOS$ of the Bonferroni correction:}
	
	Here, $H_{0i}$ is rejected if $\frac{|U_i|}{\sigma}\geq c_{Bon}$
	where, \begin{equation}
		c_{Bon}: 1-\Phi(c_{Bon}) =\frac{\alpha}{2m}
	\end{equation}
	which, for large $m$, can be equivalently written as, 
	\begin{equation}
		c_{Bon}^2 = 2\log(\frac{m}{\alpha})-\log(2(\log(\frac{m}{\alpha})))+\log(2/\pi)+o_m
	\end{equation}
	\begin{lma}
		Suppose \ref{extrmsprs} holds. The Bonferonni procedure at $FWER$ level $\alpha_m\rightarrow \alpha_\infty\in[0,1)$ is $ABOS$ if the assumptions of Theorem \ref{thm2} are satisfied.
	\end{lma}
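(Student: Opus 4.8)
The plan is to bring the Bonferroni rule inside the scope of Theorem~\ref{thm1}. The rule ``reject $H_{0i}$ if $|U_i|/\sigma\ge c_{Bon}$'' is a single-cutoff rule in $U_i^2$ of exactly the form treated there, and the side condition \ref{thm23} is already among the hypotheses of Theorem~\ref{thm2}, hence available here; the gap $U_i^2=Z_i^2+O(\bar Z)$ needs no new treatment beyond what is done in the proofs of Theorems~\ref{thm1} and \ref{thm2}. So it suffices to verify that under \ref{extrmsprs} the Bonferroni cutoff can be written $c_{Bon}^2=\log v_m+l_m^{\mathrm{Bon}}$ with $l_m^{\mathrm{Bon}}$ satisfying \ref{lm1} and \ref{lm2}, and then to quote Theorem~\ref{thm1}.

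First I would put down the two expansions already at our disposal: the Bonferroni cutoff $c_{Bon}^2 = 2\log(m/\alpha_m)-\log(2\log(m/\alpha_m))+\log(2/\pi)+o_m$, and, by Theorem~4.1 of \cite{bogdanetal11}, the $BFDR$-controlling cutoff of Theorem~\ref{thm2}, $c_B^2 = 2\log(f_m/r_{\alpha_m})-\log(2\log(f_m/r_{\alpha_m}))+C_1+o_m$ --- which the assumptions of Theorem~\ref{thm2} guarantee is $ABOS$, i.e. $c_B^2=\log v_m+l_m$ with $l_m$ obeying \ref{lm1}, \ref{lm2}. The problem thus reduces to comparing $c_{Bon}^2$ with $c_B^2$.

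Next I would carry out that comparison using \ref{extrmsprs}. The condition $\log(mp_m)=o(\log m)$ gives $-\log p_m\sim\log m$, hence $\log f_m=\log\tfrac{1-p_m}{p_m}=\log m-\log(mp_m)+o(1)$, so $\log f_m$ differs from $\log(m/\alpha_m)$ by $\log\tfrac{\alpha_m}{mp_m}+o(1)=o(\log m)$ (here $\log\alpha_m=o(\log m)$, a mild non-degeneracy of the level that also covers $\alpha_\infty=0$; when $\alpha_\infty>0$ and $mp_m\to s<\infty$ this difference is in fact $O(1)$, and the analogous facts hold for $r_{\alpha_m}$). Propagating this through both expansions --- the $\log(2\log(\cdot))$ terms then agreeing up to $o(1)$ and the additive constants up to $O(1)$ --- gives $c_{Bon}^2=c_B^2+\Delta_m$ with $\Delta_m=o(\log m)$; in the sparser regime $mp_m\to\infty$ the leading part of $\Delta_m$ is $-2\log(mp_m)\le 0$ and the level contributes $-2\log\alpha_m\ge 0$, so the only unbounded genuinely negative contribution to $\Delta_m$ is of $\log\log$-type. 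Since $\log v_m\sim c_B^2\sim 2\log m\to\infty$ (so also $u_m\sim\log v_m/C=O(\log m)=o(m)$, making \ref{thm23} automatic under \ref{extrmsprs}), we have $\Delta_m=o(\log v_m)$, so $l_m^{\mathrm{Bon}}=l_m+\Delta_m$ inherits \ref{lm1} at once. For \ref{lm2}, $l_m^{\mathrm{Bon}}+2\log\log v_m=(l_m+2\log\log v_m)+\Delta_m$; the $\log\log$-type negative part of $\Delta_m$ is of the same order as $2\log\log v_m$ and is swamped by it, while the remaining pieces of $\Delta_m$ are bounded or non-negative, so the sum still diverges. Hence $c_{Bon}$ meets the hypotheses of Theorem~\ref{thm1}, and the Bonferroni procedure is $ABOS$.

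The step I expect to be the real obstacle is this last bookkeeping: getting $\Delta_m=o(\log v_m)$ is routine and immediately yields \ref{lm1}, but \ref{lm2} forces one to track the $\log\log v_m$-scale terms and the signs of the residual corrections finely enough to rule out any negative drift of order exceeding $2\log\log v_m$. This is the equicorrelated analogue of the most delicate estimate in \cite{bogdanetal11}'s proof of the optimality of the Bonferroni (and BH) thresholds, and it is where the full strength of \ref{extrmsprs} and of the assumptions of Theorem~\ref{thm2} is needed.
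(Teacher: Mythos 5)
Your overall strategy is the right one and is, in substance, what the paper does: the paper's proof of this lemma is a one-line appeal to Theorem~\ref{thm2} together with Lemma~5.1 of \cite{bogdanetal11}, and that cited lemma is exactly the threshold comparison you are carrying out by hand (the reduction from $U_i$ to $Z_i$ having already been paid for in Theorems~\ref{thm1} and~\ref{thm2}). So you are unpacking the citation rather than taking a different route, and your remark that \ref{thm23} becomes automatic under \ref{extrmsprs} is a correct and worthwhile observation.

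However, the explicit bookkeeping --- the step you yourself flag as the crux --- contains a sign error that, as written, breaks your verification of \ref{lm2}. Writing $A=\log(m/\alpha_m)$ and $B=\log(f_m/r_{\alpha_m})$, one has
\[
A-B=\log m-\log\alpha_m-\bigl[\log(1-p_m)-\log p_m-\log\alpha_m+\log(1-\alpha_m)\bigr]=\log(mp_m)-\log(1-p_m)-\log(1-\alpha_m),
\]
so the level terms \emph{cancel} (no hypothesis of the form $\log\alpha_m=o(\log m)$ is needed, which is why the lemma can allow $\alpha_\infty=0$), and
\[
\Delta_m=c_{Bon}^2-c_B^2=2\log(mp_m)+O(1),
\]
with the \emph{plus} sign: since $m\ge f_m$ essentially, the Bonferroni threshold is the \emph{larger} one. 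Because \ref{extrmsprs} gives $mp_m\to s\in(0,\infty]$, this $\Delta_m$ is bounded below and is $o(\log m)=o(\log v_m)$, so \ref{lm1} and \ref{lm2} transfer immediately from $c_B^2$ to $c_{Bon}^2$ with no delicate cancellation at the $\log\log$ scale. With your sign, $-2\log(mp_m)$, the leading correction would be an \emph{unbounded negative} drift whenever $mp_m\to\infty$; it is not of $\log\log$-type in general (take $mp_m=(\log m)^2$, giving a drift $-4\log\log m$ comparable to $2\log\log v_m$), so your assertion that ``the only unbounded genuinely negative contribution to $\Delta_m$ is of $\log\log$-type'' contradicts your own leading term and the verification of \ref{lm2} would genuinely be in doubt. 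The conclusion you reach is correct, but only because the true sign of the leading term is positive; the argument as written does not establish it.
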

	\begin{proof}
		This lemma is a direct concequence of Theorem \ref{thm2} and Lemma 5.1 in \cite{bogdanetal11}.
	\end{proof}
	\subsection{$ABOS$ of $BH$:} According to the $BH$ rule, if $p_{(1)}\leq p_{(2)}\leq \cdots\leq p_{(m)}$ be the ordered p-valus of the $m$ hypotheses, then to control $FDR$ at level $\alpha,$ we need to reject those hypotheses with p value less than $p_{(k)}$ where, 
	
	\begin{equation}
		k = \max \bigg\{ i: p_{(i)} \leq \frac{i\alpha}{m}\bigg \}
	\end{equation}
	
	Let $(1-\hat{F}_m(y))  = \#\{|Z_i|\geq y\}/m$. Then $BH$ procedure rejects $H_{0i}$ when $Z_i^2 \leq \tilde{c}_{BH}^2$ where
	\begin{equation}
		\tilde{c}_{BH}= \inf\bigg\{ y: \frac{2(1-\Phi(y))}{(1-\hat{F}_m(y))}\leq \alpha\bigg\}
	\end{equation}
	Also, $BH$ rejects $H_{0i}$ whenever $Z_i^2 \leq c_{Bon}^2$.
	So, the random threshold for $BH$  can be written as 
	\begin{equation}
		c_{BH}= \min (c_{Bon},\tilde{c}_{BH})
	\end{equation}
	
	\begin{thm}
		Suppose, $\{\alpha_m\}$ is a sequence in $(0,1)$ such that $\alpha_m\rightarrow \alpha_\infty<1$. 
		Then if 
		\begin{equation*}
			p_m \rightarrow 0 , \quad mp_m\rightarrow s\in(0,\infty] \quad \text{as} \quad m\uparrow \infty,
		\end{equation*} then $BH$ at $FDR$ level $\alpha=\alpha_m$ is $ABOS$ if \ref{thm23}, \ref{thm21} and \ref{thm22} hold.
	\end{thm}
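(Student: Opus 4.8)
The plan is to reduce the statement to the BH optimality result in \cite{bogdanetal11} through the decomposition \ref{decomp}, in the same spirit as the proofs of Theorems \ref{thm1} and \ref{thm2}; the one genuinely new point is that BH is a random--threshold procedure whose cutoff depends on all the statistics through an empirical survival function, so the reduction from the observable $U_i$ to the unobservable $Z_i$ is more delicate than before. First I would record that the procedure we can actually run uses the observable deviations $U_i=Z_i-\bar Z$ rather than the $Z_i$, and that $\bar Z\to 0$ almost surely by the SLLN, as already used in the proof of Theorem \ref{thm1}. Writing $\hat F_m^{(U)}$ and $\hat F_m^{(Z)}$ for the empirical distribution functions of $\{|U_i|\}$ and $\{|Z_i|\}$, the realized BH cutoff is $c_{BH}^{(U)}=\min\big(c_{Bon},\tilde c_{BH}^{(U)}\big)$ with $\tilde c_{BH}^{(U)}=\inf\{y:2(1-\Phi(y))/(1-\hat F_m^{(U)}(y))\le\alpha_m\}$, whereas the quantity analysed in \cite{bogdanetal11} is the analogue $c_{BH}^{(Z)}$ built from $\hat F_m^{(Z)}$.

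The first real step is to show $c_{BH}^{(U)}=c_{BH}^{(Z)}+o_m(1)$ almost surely. Since $|U_i|$ and $|Z_i|$ differ by at most $|\bar Z|$ for every $i$ (because $U_i-Z_i=-\bar Z$), one has the sandwich $1-\hat F_m^{(Z)}(y+|\bar Z|)\le 1-\hat F_m^{(U)}(y)\le 1-\hat F_m^{(Z)}(y-|\bar Z|)$ for all $y\ge 0$. Because $|\bar Z|=o_m(1)$ a.s.\ and the relevant window is $y\asymp\sqrt{2\log m}\to\infty$, where the Gaussian hazard rate and the random tail counts are uniformly controlled, the infimum defining $\tilde c_{BH}^{(U)}$ is displaced by at most $o_m(1)$ a.s.; the Bonferroni cutoff $c_{Bon}$ is deterministic, and the $(1-1/m)$ factor in $\text{Var}(U_i)$ contributes only a $1+o_m(1)$ correction. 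Combining this with $U_i^2=Z_i^2+O(\bar Z)$ and with \ref{thm23}, which makes $\sqrt{c_{BH}}\,|\bar Z|/\sigma=o_m(1)$ a.s.\ exactly as in the lemma inside the proof of Theorem \ref{thm1}, the rejection event $U_i^2/\sigma^2\ge(c_{BH}^{(U)})^2$ coincides, on an event of probability one, with $Z_i^2/\sigma^2\ge(c_{BH}^{(Z)})^2+l_m$ where $l_m=o_m(1)$.

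The second step is to invoke \cite{bogdanetal11}: the hypotheses $p_m\to 0$ and $mp_m\to s\in(0,\infty]$ place us in the sparse regime \ref{extrmsprs}, where (Section~5 of \cite{bogdanetal11}) $\tilde c_{BH}^{(Z)}\ge c_{Bon}$ with probability tending to one, so $c_{BH}^{(Z)}=c_{Bon}$ with probability tending to one, and the Bonferroni rule at level $\alpha_m\to\alpha_\infty<1$ is an ABOS fixed--threshold rule — equivalently $c_{Bon}^2=\log v_m+l_m$ with $l_m$ obeying \ref{lm1}--\ref{lm2} — precisely when \ref{thm21} and \ref{thm22} hold, by the preceding lemma on the Bonferroni procedure together with Theorem \ref{thm2}. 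Carrying the extra term $l_m=o_m(1)$ from the first step through this does not disturb \ref{lm1}--\ref{lm2}, by the argument already used for Theorem \ref{thm1}, so with probability one the $U_i$--based BH rule has a cutoff of the ABOS form. Since ABOS concerns a ratio of Bayes risks converging to one and these cutoff conditions hold with probability one (equivalently, along a sequence of events of probability tending to one), the risk ratio still converges to one, which proves the claim.

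I expect the first step to be the main obstacle. In Theorems \ref{thm1} and \ref{thm2} the cutoff was fixed and the only nuisance was the additive $\bar Z$ perturbation of a single statistic; here the cutoff is a functional of the empirical survival function of the $U_i$, which is itself a perturbation of that of the $Z_i$, so one must show that an $o_m(1)$ shift of that function moves the BH threshold by only $o_m(1)$, uniformly over the window $y\asymp\sqrt{2\log m}$. This is exactly where the concentration estimates for empirical tail counts from Section~5 and the supplement of \cite{bogdanetal11} must be borrowed; the remainder is bookkeeping already carried out for Theorems \ref{thm1} and \ref{thm2}.
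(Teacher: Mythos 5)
Your overall strategy --- transfer the BH analysis from the unobservable $Z_i$ to the observable $U_i=Z_i-\bar Z$ and then invoke Theorem 5.2 of \cite{bogdanetal11} --- is exactly what the paper does; the paper's own proof is a one-line citation of that theorem together with Lemma \ref{lemma1}, so your first step (the sandwich bound $1-\hat F_m^{(Z)}(y+|\bar Z|)\le 1-\hat F_m^{(U)}(y)\le 1-\hat F_m^{(Z)}(y-|\bar Z|)$ and the resulting $o_m(1)$ displacement of the random threshold) is a genuine and welcome addition: the perturbation of an empirical-quantile-type threshold by a common shift $\bar Z$ is precisely the point the paper glosses over, and your treatment of it is sound in outline.

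The gap is in your second step. The hypotheses $p_m\to 0$ and $mp_m\to s\in(0,\infty]$ do \emph{not} place you in the regime \ref{extrmsprs}; that regime additionally requires $\log(mp_m)/\log(m)\to 0$, whereas the theorem is also meant to cover the denser regime \ref{mdmsprs} in which $\log(mp_m)/\log(m)\to C_p\in(0,1]$ (e.g.\ $p_m\propto m^{-\beta}$ with $\beta<1$). In that regime the claim that $\tilde c_{BH}^{(Z)}\ge c_{Bon}$ with probability tending to one, and hence $c_{BH}^{(Z)}=c_{Bon}$, is false: the number of true signals grows polynomially, the empirical survival function is inflated well above its null expectation in the relevant window, and the data-driven threshold satisfies $\tilde c_{BH}^2\approx 2\log(1/(\alpha p_m))$, which is smaller than $c_{Bon}^2\approx 2\log(m/\alpha_m)$ by an amount of order $2\log(mp_m)\to\infty$ that is \emph{not} negligible relative to conditions \ref{lm1}--\ref{lm2}. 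The correct route --- and the one Theorem 5.2 of \cite{bogdanetal11} actually takes --- is to show via a concentration argument for the empirical tail counts that the random threshold $c_{BH}$ lies, with probability tending to one, within $o_m(1)$ of the Genovese--Wasserman threshold $c_{GW}$ of \ref{approxbh}, and then to appeal to the preceding theorem of the paper (ABOS of the $c_{GW}$ rule under \ref{thm23}, \ref{thm21}, \ref{thm22}); the reduction to Bonferroni is only valid in the extremely sparse sub-case $s<\infty$ (or more generally under \ref{extrmsprs}). As written, your argument proves the theorem only for that sub-case.
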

	
	\begin{proof}
		Following Theorem 5.2. in  \cite{bogdanetal11} and  lemma \ref{lemma1} the proof of the theorem is straightforward.
	\end{proof}
	\section{Discussion:} \label{section 5}
	We have considered $\rho$ to be non-negative mainly because we are studying the asymptotic proprties as $m\uparrow\infty$ but we are considering $\rho$ to be fixed. Therefore for the correlation matrix to be non negative definite we require $\rho \geq \lim_{m\uparrow\infty} -1/m$. i.e., $\rho \geq 0$. Note that, Assumption \ref{asmtn1} is identical to  Assumption A in \cite{bogdanetal11} if we consider $\rho$ to be fixed and known. However our analysis is still viable if $\rho_m$ changes with $m$ following Assumption \ref{asmtn1} given the following conditions hold:
	
	\begin{equation*}
		\begin{split}
			\rho_m \in [-\frac{1}{m},1] \\
			\lim_{m\uparrow\infty} \rho_m = \rho \in [0,1]
		\end{split}
	\end{equation*} 
	
	$FDR$ controlling procedures under dependence has been a key subject of research for last two decades. A study of the existing literature is available in \cite{Ghosh19}. In this article we proved the asymptotic optimality of the Benjamini Hochberg rule for 2 different sparsity levels (\ref{extrmsprs},\ref{mdmsprs}) for the equicorrelated assumption. More general dependence adaptations of the same problem is a possible extension.  
	
	Considering non gaussian assumption on the joint distribution. An immidiate follow up can be to consider the Subbotin class of distributions as adopted in \cite{NR12}. However since a result of the form Lemma \ref{lemma1} is absent for non gaussian distributions, this adaptation maight be tricky and may require more invasive mathematcal results.
	
	We have considered the parameters $\sigma$ and $\rho$ to be known and followed an empirical Bayes' path for the analysis. Instead we could have used a fulll Bayesian treatment considering the parameters to be unknown and assuming appropriate prior distributions. A Horseshoe prior as considered in \cite{DG13} might be appropriate for this study.

	

\end{document}